\newtheorem{definition}{Definition}
\newtheorem{proposition}{Proposition}
\def\BibTeX{{\rm B\kern-.05em{\sc i\kern-.025em b}\kern-.08em
    T\kern-.1667em\lower.7ex\hbox{E}\kern-.125emX}}
\begin{document}

\title{Classification of Sequential Circuits \\ as Causal Functions}

\author{\IEEEauthorblockN{Shunji Nishimura}
\IEEEauthorblockA{\textit{Department of Information Engineering} \\
\textit{National Institute of Technology, Oita College}\\
Oita, Japan \\
e-mail: s-nishimura@oita-ct.ac.jp}
}

\maketitle

\begin{abstract}
In sequential circuits, the current output may depend on both past and current inputs. However, certain kinds of sequential circuits do not refer to all of the past inputs to generate the current output; they only refer to a subset of past inputs. This paper investigates which subset of past inputs a sequential circuit refers to, and proposes a new classification of sequential circuits based on this criterion. The conventional classification of sequential circuits distinguishes between synchronous and asynchronous circuits. In contrast, the new classification consolidates synchronous circuits and multiple clock domain circuits into the same category.
\end{abstract}

\begin{IEEEkeywords}
sequential circuits, causal functions, dependent types
\end{IEEEkeywords}

\section{Introduction}
Digital sequential circuits are typically classified into two categories: synchronous circuits and asynchronous circuits.
A circuit that has memory elements controlled by a clock is classified as synchronous, whereas a circuit without such memory elements is classified as asynchronous.
However, this conventional classification scheme comprising one specific kind and the remainder may not facilitate a thorough understanding of sequential circuits.
For instance, a circuit featuring memory elements with two clock domains, also known as a multiple clock domain circuit, is classified as asynchronous, but it exhibits characteristics similar to those of synchronous circuits rather than those of other typical asynchronous circuits.

To broaden the conventional category of synchronous circuits, one could simply define it as depicted shown in Fig. \ref{fig:SyncCircDef}, where memory control signals $C$ may comprise multiple signals, and multiple clock domain circuits fall within this category.
However, since any sequential circuit could potentially be viewed as a memory element, this definition encompasses all sequential circuits and thus fails to serve as a useful classification.
Therefore, some form of specification for the memory elements may be necessary for this approach to be effective.
\begin{figure}[t]
  \centering
  \includegraphics[width=0.45\linewidth]{./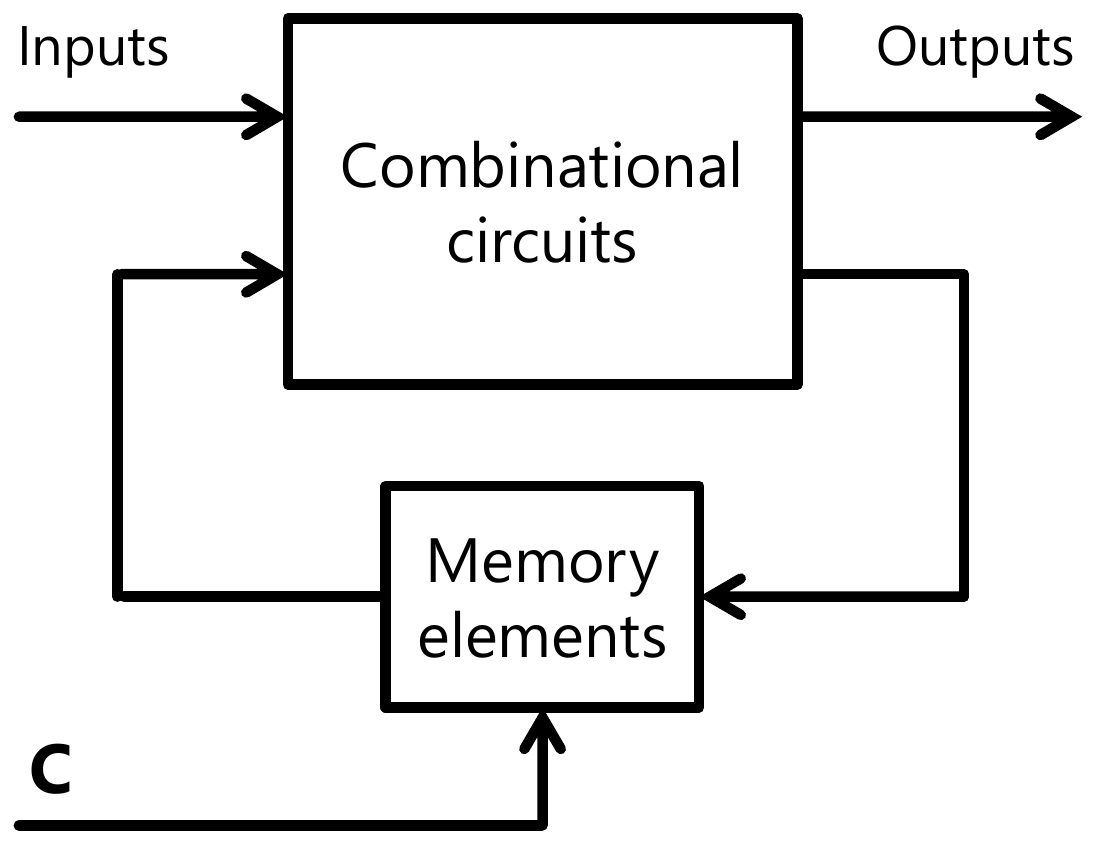}
  \caption{Attempt to define a new class of sequential circuits}
  \label{fig:SyncCircDef}
\end{figure} 

Of course, you could directly define an arbitrary number of D flip-flops allowed for synchronous-like circuits, as employed in \cite{dietmeyer1979logicP306}.
However, in this case, the classification again comprises one specific knd and the remainder, and thus an alternative theoretical or semantical classification may still be desirable.
In the previous study \cite{fujiwara2000new,ooi2005classification}, classifications of synchronous sequential circuits in relation to test generation are presented. However, these classifications are specific to synchronous circuits rather than all sequential circuits.

In this paper, sequential circuits are viewed as causal functions, the outputs of which depend solely on past and current inputs, but not on future inputs. As such, we can use functional expressions to describe sequential circuits, which enable us to examine how a circuit refers to past inputs to generate current output. A specifec manner of referring to past inputs inspires a new classification of sequential circuits.

This study builds upon \cite{BroadSenseSync} and chapter 5 of \cite{MyPhD}, in which an expanded concept of conventional synchronous circuits is presented.
The fundamental idea in these two references and this paper is similar; circuits are viewed as causal functions and we investigate how the functions refer to past inputs. However, the main differences are as follows.
In previous works, the proposed classification aims to extend the class of conventional synchronous circuits.
In contrast, this paper's classification does not, as 
we postulate that all stored states must be updated simultaneously in synchronous circuits.
Furthermore, this paper employs type signatures and dependent types to describe proposed notions, leading to more concise expressions for these.

\section{Preliminaries}

\subsection{Example for an elementary concept}
Consider a partial function defined on the natural numbers that subtracts the second argument from the first:
\begin{align*}
f \;:\; \mathbb{N}\times \mathbb{N} &\to \mathbb{N} \oplus \bot\\
(n,m)\;\; &\mapsto \begin{cases}
    n-m & (n-m \in \mathbb{N}) \\
    \;\;\;\bot & (otherwise)
  \end{cases},
\end{align*}where $\mathbb{N}$, $\oplus$ and $\bot$ are natural numbers $\{0,1,2,\cdots\}$, direct sum and undefined respectively.
The function $f$ is called partial because the actual domain $f^{-1}(\mathbb{N})$ is $\{(n, m) \;|\; n \geq m\}$ and is not equal to the entire $\mathbb{N}\times\mathbb{N}$. For instance, $f(1,2)$ is undefined.
If we try to express $f$ without $\bot$, we can get a type signature
\[f : \{(n, m) \in \mathbb{N}\times\mathbb{N}\;|\; n \geq m\}  \to \mathbb{N},\]
while in this article, we adopt a viewpoint of the first argument restricts the second argument.
Considering $f$ from that viewpoint, first arguments are arbitrary from $\mathbb{N}$ but second arguments are not; they are restricted by first arguments, such as first argument $1$ make second domain $\{0,1\} \subset \mathbb{N}$ and first argument $2$ make second domain $\{0,1,2\}$.
Another type signature of $f$ can be expressed as (in an intuitive expression):
\[f: \mathbb{N}\times\{m\in\mathbb{N}\,|\,\text{\scriptsize m is greater than or equal to the first argument}\} \\ \to\mathbb{N},\]
and that explains the restriction relation between the first and second arguments.
For this example, the restriction relation is exactly ``$\geq$'', and that is all.
Nevertheless, from this viewpoint, we will discover interesting notions about sequential circuits.

\subsection{Dependent types}

We need notions of dependent type \cite{martin1975intuitionistic} on sets and elements instead of types and terms.
For given $A: Set$ and $\alpha: A \to Set$, a subset $(a: A) \times \alpha\,a \,\subset\, A \times {\displaystyle \sum_{x\in A}}\,\alpha\,x$ is defined as
\[(x,y) \in (a: A) \times \alpha\,a \enspace\;\text{iff}\;\enspace x\in A \;\text{and}\; y\in \alpha\,x,\]
where $(a : A)$ represents a domain $A$ and taking arbitrary $a$ from $A$ for later expressions.
Note that in this article, $(a : A)$ denotes a domain with dependent types as explained above, and $a\in A$  denotes the usual proposition.

\subsection{Domain restriction}

With the use of dependent types, we can consider a partial function $A\times B\to O$, of which the second domain $B$ will be restricted by an argument from the first domain $A$. Such a function $f$ could be described with dependent types as
\begin{equation*}
f\,:\, (a: A)\times \alpha\,a\to O,
\label{eq:FirstRestriction}
\end{equation*}where $A,B\in Set$ and $\alpha: A\to\mathcal{P}(B)$, as a power set of $B$.
The function $f$ still matches rough notation $f \subset A \times B \to O$, but at the second domain $B$, $f$ can only take $b \in B$ that is restricted by $\alpha$. Let us call such $\alpha$ a \textbf{restriction map}.
To introduce restriction maps to plural domains, let us modify $\alpha : A \to \mathcal{P}(B)$ to $\alpha' : A \to \mathcal{P}(A\times B)$ of the entire domain (we do not care how $\alpha'$ works for $A$ here), then the above $f$ becomes
\[f\,:\, (a: A)\times \pi_{\small B} (\alpha'\,a) \to O,\]
where $\pi_{\small B} : A \times B \to B$ denotes the projection map of Cartesian products.
Now we can describe domain restrictions for plural products. For given $A,B,C\in Set$, and restriction maps $\alpha: A\to\mathcal{P}(A\times B\times C)$ and $\beta : B\to\mathcal{P}(A\times B\times C)$, a function with two restriction maps become
\begin{equation}
g: (a:A)\times (b: \pi_{\small B} (\alpha\,a))\times \pi_{\small C} (\alpha\,a\cap\beta\,b)\to O,
\label{eq:NoC}
\end{equation}where $\pi_B$ and $\pi_C$ denote projection maps.
Note that $g$ still matches $A \times B \times C \to O$ roughly.
In addition, above (\ref{eq:NoC}) can be described with an unused element $c\in C$ as
\begin{equation}
g\,:\,(a: A)\times (b: \pi_{\small B} (\alpha\,a))\times (c: \pi_{\small C} (\alpha\,a\cap\beta\,b))\to O.
\label{eq:ABCrestriction}
\end{equation}

In a similar way, corresponding to a rough notation ${\displaystyle \prod_{i=0,\cdots ,n}A_i} \to O$, we have a domain restriction expression
\[{\displaystyle \prod_{i=0,\cdots,n}}\left(a_i: \pi_i \left( \bigcap_{j=0,\cdots,i}\alpha_j\,a_j\right)\right) \to O,\]
where $\alpha_{j}: A_j\to\mathcal{P}\left(\displaystyle \prod_{i=0,\cdots ,n}A_i\right)$ are restriction maps for $j=1,\cdots,n$ and $\alpha_{0}$ is a constant map to the entire domain $\displaystyle \prod_{i=0,\cdots ,n}A_i \in \mathcal{P}\left(\displaystyle \prod_{i=0,\cdots ,n}A_i\right)$.

\subsection{Exponential maps}

With regard to exponential maps as in common expression $B^A$, consider them as a part of Cartesian products, i.e., $B^A\in \mathcal{P}(A\times B)$. Then for given
$A,B,C$ : set and $f: A\times C^B \to O$, a domain restriction expression can be
\[(a: A)\times \pi_{\small C}(\alpha\,a)^{\pi_{\small B}(\alpha\,a)}\to O,\]
where $\alpha: A\to\mathcal{P}(A\times B\times C)$ is a restriction map.
In general, substituting $\left(a_i: \pi_i \left( \bigcap_{j=0,\cdots,i}\alpha_j\,a_j\right)\right)$ for each domain$A_i$ of rough notation makes a corresponding restriction expression.

\subsection{Currying and abbreviation}

Considering a restriction expression
\[(a: A)\times (b: \pi_{\small B}(\alpha\,a))\times \pi_{\small C}((\alpha\,a)\cap (\beta\,b))\to O,\]
when the technique of \textit{currying} is employed, it becomes a higher-order function:
\[(a: A)\to (b: \pi_{\small B}(\alpha\,a))\to \pi_{\small C}((\alpha\,a)\cap(\beta\,b))\to O,\]
which means receiving the first argument from $A$ generates a new function that takes its next argument from $B$ (to be exact, $\pi_{\small B}(\alpha\,a)$),  and so on.
It looks better to place restriction maps on the arrows as
\[(a: A)\xrightarrow{\vert_{\alpha\,a}} (b: B)\xrightarrow{\vert_{\beta\,b}} C\to O,\]
with conventional restriction symbol ``$\vert$'' of functions, such as $f\vert_A$.
Since each argument $a, b,\cdots$, is referred to consistently, we can use the further abbreviation as
\[A\xrightarrow{\vert_\alpha} B\xrightarrow{\vert_\beta} C\to O.\]
Concerning the function application, we use the {\it lambda calculus} approach such as $f\,a\,b$ for curried functions, instead of $f(a,b)$ for Cartesian product functions, in the rest of this article.
Finally, these developments are summarized as follows.
\begin{definition} (Domain restriction with abbreviated notation) \label{def-DR}\\
For given restriction maps $\alpha_j : A_j\to \mathcal{P}\left(\displaystyle \prod_{i=0,\cdots ,n}A_i\right) (j=0,\cdots,n)$,
\[A_0\xrightarrow{\vert_{\alpha_0}} A_1\xrightarrow{\vert{\alpha_1}} A_2\xrightarrow{\vert{\alpha_2}}\cdots \xrightarrow{\vert{\alpha_{n-1}}} A_n \xrightarrow{\vert{\alpha_n}} O\]
is an abbreviation for
\begin{eqnarray*}
(a_0: A_0)\to (a_1: \pi_1(\alpha_0\,a_0))\to (a_2: \pi_2(\alpha_0\,a_0 \cap \alpha_1\,a_1))\to \\
\quad\quad\cdots\to (a_n: \pi_n(\alpha_0\,a_0 \cap \alpha_1\,a_1\cap\cdots\cap\alpha_{n-1}\,a_{n-1}))\to O.
\end{eqnarray*}
\end{definition}
Note that the last argument $a_n$ is unnecessary as a matter of fact.

\subsection{Extension of restriction maps}

Considering partial functions of $ A\to B\to C\to O$, if there is a map $\alpha : A \to \mathcal{P}(B)$, we can extend $\alpha$ to $\alpha'$ for entire domain $A\times B\times C$ (in a supremum way) as
\begin{align*}
\alpha' \,:\, A &\to \mathcal{P}(A\times B\times C) \\
\,a &\mapsto A\times(\alpha\,a)\times C,
\end{align*}and use it as a restriction map as
\[A\xrightarrow{\vert{\alpha'}}B\to C\to O.\]
Henceforth, we will not distinguish extended restriction maps from original maps when the entire domain has been clearly given.

\section{Circuits as functions with type signatures}

Sequential circuits can be treated as causal functions as shown below.
For a partially ordered set $T$ as time, a set of input values $I$, and a set of output values $O$, a sequential circuit $f_0$ may be described as
\begin{equation}
f_0\,:\, I^T\to O^T,
\label{eq:CircF0}
\end{equation}where $I^T$ and $O^T$ denote input and output signals respectively.
However this description does not consider {\it causality}. If you want to describe sequential circuits as functions, they must be causal functions, which means the current output depends on only past and current inputs but not on future inputs.
Therefore, we have to consider a causal function expression
\begin{equation}
f\,:\, (t: T)\to (I^{T_{\leq t}} )\to O,
\label{eq:CircF}
\end{equation}where $T_{\leq t} := \{t' \in T\,|\, t' \leq t\}$. In this description, time-given function $f\,t\,$ has a type of $I^{T_{\leq t}}\to O$, i.e., a function that the current output depends on past and current inputs.

When an input signal $\iota \in  I^T$is given, the output signal can be constructed by $f$ as
\begin{align*}
f_0' \,:\, I^T &\to O^T\\
\iota\;\; &\mapsto (t \mapsto f\,t\;(\iota \,\vert{T_{\leq t}}),
\end{align*}
where $\iota \,\vert_{T_{\leq t}}$ denotes a function that is $\iota$, but the domain is restricted to $T_{\leq t} \subset T$.
In the end, causal function $f$ can also be seen as having type signature (\ref{eq:CircF0}), and thus we can consider (\ref{eq:CircF}) as a sequential circuit.

Toward a much simpler expression, if we could read $\leq$ as
\begin{align*}
\leq\;:\, T &\to \mathcal{P}(T)\\
t\; &\mapsto T_{\leq t},
\end{align*}expression (\ref{eq:CircF}) becomes
\begin{equation}
f \,:\, T \,\xrightarrow{\vert_\leq}\, I^T \to O.
\label{eq:SequentialCircuit}
\end{equation}

\subsection{D flip-flops}

\label{ssec:DFF}
As our first example of sequential circuits, we choose the {\it D flip-flops}, which latches an input value when the clock ticks.
To be more precise, the D flip-flop is depicted as shown in Fig. \ref{fig:SymbolDFF}, and it behaves as expressed in TABLE \ref{tbl:DFF}, assuming it is a positive-edge-triggered D flip-flop.
\begin{figure}[b]
  \centering
  \includegraphics[width=25mm]{./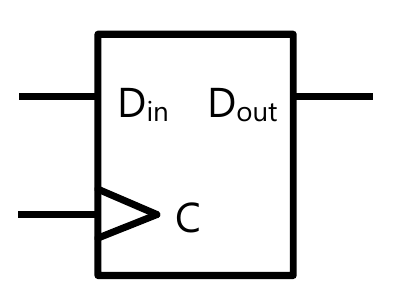}
  \caption{Common symbol of D flip-flops}
  \label{fig:SymbolDFF}
\end{figure}
\begin{table}[b]
  \centering
  \small
  \caption{Truth table of D flip-flops}
  \label{tbl:DFF}
  \begin{tabular}{cc|c} \hline
   $C$ & $D_{in}$ & $D_{out}$ \\
   \hline
   $\uparrow$ & $0$ & $0$ \\
   $\uparrow$ & $1$ & $1$  \\
   otherwise & don't care &  last $D_{out}$ \\
   \hline
   \\
  \end{tabular}
 \end{table}
Input $D_{in}$ is reflected in output $D_{out}$ at the time that clock $C$ changes $0$ to $1$, which is called a positive edge and d enoted by $\uparrow$ in TABLE 
\ref{tbl:DFF}, and otherwise output $D_{out}$ remains.
\begin{figure}
 \centering
  \includegraphics[width=40mm]{./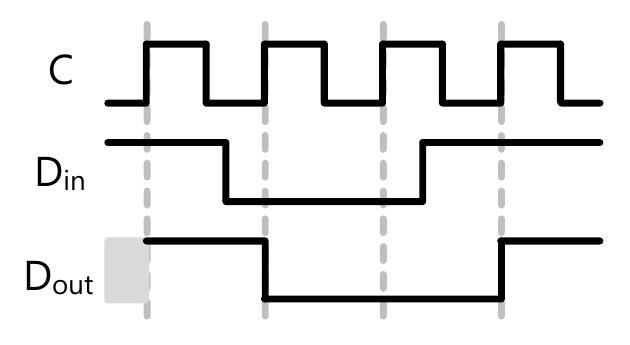}
  \caption{A timing diagram of D flip-flops}
  \label{fig:TimingDiagramDFF}
\end{figure}
An example behavior is depicted in the timing diagram Fig. \ref{fig:TimingDiagramDFF}, and it can be seen that $D_{out}$ simply generates $D_{in}$ value at the last positive edge of $C$.

A causal function $f$ of D flip-flop could be expressed in the same way as type description (\ref{eq:SequentialCircuit}) as
\begin{equation*}
f \,:\, T \,\xrightarrow{\vert_\leq}\, (C\times D_{in})^T \to D_{out},
\label{eq:DFF1}
\end{equation*}and since $(C\times D_{in})^T$ is isomorphic to $C^T\times D_{in}^{\;\,T}$, we can rewrite this as
\[f \,:\, T\,\xrightarrow{\vert_\le}\,C^T\to D_{in}^{\;\, T}\to D_{out}.\]
Taking the first argument $t\in T$, the second and third domains are restricted to $C^{T_{\leq t}}$ and $D_{in}^{\;\,T_{\leq t}}$ respectively. Furthermore, taking the second argument, clock signal $\sigma\in C^{T_{\leq t}}$, we do not need the third argument of signal $D_{in}^{\;\,T_{\leq t}}$ on all the pasts and current, but only need the $D_{in}$ value at the last positive edge of clock signal $\sigma$.
Thus, introducing restriction map $\chi$ as, for a given $t \in T$, 
\begin{align}
\label{eq:ChiOfDFF}
\chi\,:\, C^{T_{\leq t}}&\to\mathcal{P}(T)\\
\sigma\;\;\,&\mapsto \{{\scriptstyle \text{the time of the latest positive edge of}}\;\sigma\}, \nonumber
\end{align}
we finally obtain a precise type signature for a D flip-flop:
\begin{equation}
f \,:\, T\,\xrightarrow{\vert_\le}\,C^T\,\xrightarrow{\vert_\chi}\,D_{in}^{\;\, T}\to D_{out},
\label{eq:DFFfinal}
\end{equation}
and the function body is defined as
\begin{align*}
f\,t\,\sigma \,:\, D_{in}^{\;\,(\chi\,\sigma)} (\cong D_{in}) &\to D_{out}\\
d\quad\quad\quad &\mapsto \;d,
\end{align*}i.e., the identity function.
Expression (\ref{eq:DFFfinal}) indicates that the first map $\leq$ restricts the entire $T$ to the past and current inputs, and the second map $\chi$ also restricts it to only the last positive-edge time.
Note that for a given $t \in T$ and $\sigma \in C^{T_{\leq t}}$, if there is no positive edge in $\sigma$, $\chi\,\sigma$ becomes $\emptyset$, then $f$ is not defined in that situation.

\subsection{SR latch}

An SR(Set Reset) latch has two inputs $S$ to set and $R$ to reset, and output $Q$; it behaves as shown in TABLE \ref{tbl:SRlatch}.
\begin{table}
  \centering
  \small
  \caption{Truth table of SR latch}
  \label{tbl:SRlatch}
  \begin{tabular}{cc|c} \hline
   S & R & Q \\ \hline
   0 & 0 & last Q \\
   0 & 1 & 0  \\
   1 & 0 & 1  \\
   1 & 1 & 0  \\ \hline
  \end{tabular}
\end{table}
This could also be expressed in the same way as type description (\ref{eq:SequentialCircuit}) as
\[T \,\xrightarrow{\leq}\, (S\times R)^T \to Q.\]
However, since any input signal $S^T$ does not restrict the other input signals $R^T$ and vice versa, its type signature cannot develop further with restriction maps, thus restriction maps do not derive any benefit for the type signature of SR latches.
In contrast, in the case of D flip-flops above, clock signals $C^T$ restrict input signals $D_{in}^{\;\, T}$, and in that situation, restriction maps provide an accurate specification on their type signatures.

\section{Concept of time-preserving}

Based on the observations made in the previous section, in order to derive benefits from utilizing type descriptions with restriction maps, we should concentrate on certain types of circuits or causal functions as follows.
\begin{itemize}
    \item circuits that possess clock-like control signals, such as a clock for synchronous circuits
    \item those control signals restrict subsequent domains
\end{itemize}
A formal expression is provided in the next definition.
\begin{definition} Fundamental form \\
For a set of input values $I$, output values $O$, partial ordered set $T$ as time, and set $C$ as a type signature of control signals, a causal function $f$ in which controll signal $\sigma\in C^T$ restricts the latter input $I^T$ in the following manner:
\begin{equation}
f \,:\, T\,\xrightarrow{\vert_\le}\,C^T\,\xrightarrow{\vert_\chi}\,I^T\,\to\, O,
\label{eq:FundamentalForm}
\end{equation}where $\chi\,:\, C^T\to\mathcal{P}(I^T)$ is called fundamental form.
\label{def:FundamentalForm}
\end{definition}
Note that since an actual domain of $\chi$ will be restricted by given $t$, the domain becomes $(t: T)\times (C^{T_{\leq t}})$ to be precise.
We also express a type signature of $\chi$ at given $t$ as $\chi_t\,:\, C^{T_{\leq t}}\to\mathcal{P}(I^{T_{\leq t}})$.


With respect to a circuit $f$ of fundamental form (\ref{eq:FundamentalForm}), we will closely investigate the first two domains $T$ and $C^T$, thereby reverting them from the abbreviated description:
\[f\,:\,(t: T)\times (C^{T_{\leq t}} )\,\xrightarrow{\vert_\chi}\,I^T\,\to\, O.\]
An element $(t,\sigma)$ of the domain $(t: T)\times (C^{T_{\leq t}} )$ is a $C$-valued signal until the current time $t$, which is likened to a clock signal in a conventional notion.
We are introducing a partial order to that domain, naturally derived by the partial order of time $T$.
\begin{definition} Order on causal signals \\
A partial order on $(t: T)\times (C^{T_{\leq t}})$  is defined as:
\[
(t,\sigma)\leq (t',\sigma')\;:\Leftrightarrow\; t\leq t' \;\text{and}\;\, \forall u\leq t,\,\sigma\,u\,=\,\sigma'\,u.
\]
\label{def:TheOrder}
\end{definition}
The last equation of the definition says that $\sigma$ and $\sigma'$ have the same values along with the past of $t$, i.e., their common domain.
By contrast, if there is a time $u\le t$ such that $\sigma\,u\neq \sigma'\,u$, they are not in the order, not in the relation of past and future.
These indicate, in short, ``you cannot change the past,'' and thus, the definition is quite consistent with our conceptual interpretation of time and signals.
In fact, when a signal $\sigma\in C^T$ is given, as is necessary in the real world, each $\sigma$-involved part of $(t: T)\times (C^{T_{\leq t}} )$ at $t$ becomes
\[
\{\,(t,\sigma_t)\,|\,t\in T\,\}\,\cong\,T,
\]
where $\sigma_t $ is a function that is basically $\sigma$ but with the domain restricted to $T_{\leq t}$. That $\sigma$-involved part is order-isomorphic to $T$, and the proposed order appears to be sufficiently convincing.
In addition, the order of Def. \ref{def:TheOrder} becomes {\it prefix order} \cite{cuijpers2013prefix,van2001branching} in terms of transition systems and stream functions.

Thus we have brought a partial order into $(t: T)\times (C^{T_{\leq t}})$ derived from the original time $T$. Next, we consider how the order is to be applied to the final domain $\mathcal{P}(I^T)$ through $\chi$.
From this perspective, we propose our definition as below.
\begin{definition} Time-preserving \\
For a given causal function $f$ typed as
\[
f \,:\, T\,\xrightarrow{\vert_\le}\,C^T\,\xrightarrow{\vert_\chi}\,I^T\,\to\, O,
\]where a restriction map $\chi_t\,:\,C^{T_{\leq t}}\to\mathcal{P}(I^{T_{\leq t}})$ for $t\in T$, $f$ is called time-preserving when the image of $\chi$ has a partial order and $\chi$ becomes order-preserving.
\label{def:TimePreserving}
\end{definition}
That is to say, regarding a time-preserving function $f$, a temporal aspect of $T$, i.e., the order of Def. \ref{def:TheOrder}, is reflected in the image of $\chi$, which is the actual final domain of $f$, i.e., $\chi\,((t: T)\times (C^{T_{\leq t}}))\,\subseteq\, I^T$.
In short, for a time-preserving function, a temporal aspect is preserved throughout its domains.
 
With respect to general relations on the image of $\chi$, we can consider derived order $\chi\, R$ , where $R$ is the partial order on $(t: T)\times (C^{T_{\leq t}})$ of Def. \ref{def:TheOrder} as follows: 
for a given partial order $R \,\subseteq\, ((t: T)\times (C^{T_{\leq t}}))\,\times\,((t: T)\times (C^{T_{\leq t}}))$, the derived relation is determined by
\[
\chi\, R\,=\,\{\,(\chi\,x,\chi\,y)\;|\;(x,y)\in R\,\} \subseteq\,\mathcal{P}(I^T)\times \mathcal{P}(I^T).
\]
The derived order $\chi\, R$ is an important touchstone for judging whether a function is time-preserving.
Indeed, we immediately obtain the next proposition.
\begin{proposition}
For a function $f$ of Def. \ref{def:TimePreserving} and a partial order $R$ on $(t: T)\times (C^{T_{\leq t}})$, if derived relation $\chi\, R\,$ becomes a partial order, $f$ is time-preserving.
\label{prop:OP}
\end{proposition}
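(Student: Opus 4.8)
The plan is to prove this straight from the definitions by exhibiting an explicit partial order on the image of $\chi$ and checking that $\chi$ respects it, so that Def.~\ref{def:TimePreserving} applies. Write $D := (t:T)\times (C^{T_{\leq t}})$ for the (precise) domain of $\chi$, let $R$ be the partial order on $D$ of Def.~\ref{def:TheOrder}, and let $\mathrm{Im}(\chi) := \chi(D) \subseteq \mathcal{P}(I^T)$. The natural candidate for the partial order on $\mathrm{Im}(\chi)$ is the derived relation $\chi\,R$ itself, which by construction already lives inside $\mathrm{Im}(\chi)\times\mathrm{Im}(\chi)$.

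First I would record the one order axiom that is automatic: $\chi\,R$ is reflexive on $\mathrm{Im}(\chi)$, since for $z\in\mathrm{Im}(\chi)$ we may pick $x\in D$ with $\chi\,x=z$, and reflexivity of $R$ gives $(x,x)\in R$, hence $(z,z)=(\chi\,x,\chi\,x)\in\chi\,R$. The other two axioms — antisymmetry and transitivity — are precisely where the hypothesis does its work: because $\chi$ may send distinct elements of $D$ to the same subset of $I^T$, neither property is forced on the pushforward $\chi\,R$ in general. The assumption of the proposition, that $\chi\,R$ ``becomes a partial order,'' is exactly the statement that these two properties do hold; so under it $\chi\,R$ is a partial order on $\mathrm{Im}(\chi)$.

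It then remains only to verify that $\chi$ is order-preserving as a map from $(D,R)$ to $(\mathrm{Im}(\chi),\chi\,R)$, and this is immediate from the definition of the derived relation: if $(x,y)\in R$ then $(\chi\,x,\chi\,y)\in\chi\,R$. Since the order $R$ on $D$ is the temporal order on causal signals of Def.~\ref{def:TheOrder}, we have produced a partial order on the image of $\chi$ for which $\chi$ is order-preserving, which is exactly the content of Def.~\ref{def:TimePreserving}; hence $f$ is time-preserving.

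I do not expect a genuine obstacle: the argument is essentially definition-chasing. The single point worth stating carefully is that the hypothesis ``$\chi\,R$ becomes a partial order'' is not vacuous — reflexivity comes for free, but antisymmetry and transitivity can fail exactly because $\chi$ need not be injective, and it is these two conditions that the hypothesis guarantees. It is also worth making explicit that ``order-preserving'' in Def.~\ref{def:TimePreserving} is meant in the ordinary monotone sense (preserving $\leq$) rather than as an order embedding; otherwise one would additionally need $\chi$ to reflect the order, which the given hypothesis does not supply, and the proposition as stated would be false.
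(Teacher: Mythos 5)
Your proposal is correct and follows essentially the same route as the paper, whose entire proof is the one-line remark that the claim ``naturally follows from the definition of $\chi\,R$'': you simply make explicit what that definition-chase involves, namely taking $\chi\,R$ itself as the order on the image and observing that $\chi$ is then order-preserving by construction. Your added caveats (reflexivity being automatic, the hypothesis doing its work on antisymmetry and transitivity, and ``order-preserving'' meaning monotone rather than an embedding) are accurate and a useful elaboration, but they do not change the argument.
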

\begin{proof}
This naturally follows from the definition of $\chi\, R\,$.
\end{proof}
 In addition, the next proposition will be convenient for determining non-time-preserving functions.
\begin{proposition}
For a function $f$ of Def. \ref{def:TimePreserving} and a partial order $R$ on $(t: T)\times (C^{T_{\leq t}})$, if derived relation $\chi\, R\,$ does not become partial order, $f$ is not time-preserving.
\label{prop:nonOP}
\end{proposition}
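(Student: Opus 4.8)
The plan is to argue by contraposition: I would show that \emph{if} $f$ is time-preserving \emph{then} the derived relation $\chi R$ is a partial order. Proposition~\ref{prop:nonOP} is then exactly the contrapositive, and together with Proposition~\ref{prop:OP} it packages into the equivalence ``$f$ is time-preserving $\iff$ $\chi R$ is a partial order'', which is the intended role of the ``touchstone'' $\chi R$.

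So suppose $f$ is time-preserving. By Definition~\ref{def:TimePreserving} the image $\chi\big((t:T)\times(C^{T_{\leq t}})\big)\subseteq I^T$ carries a partial order, say $\preceq$, and $\chi$ is order-preserving from $\big((t:T)\times(C^{T_{\leq t}}),R\big)$ to that ordered image, where $R$ is the order of Definition~\ref{def:TheOrder}. Order-preservation means precisely that $(x,y)\in R$ forces $\chi x\preceq\chi y$, hence $\chi R\subseteq\,\preceq$ as relations on the image. Two of the three partial-order axioms for $\chi R$ are then immediate: reflexivity, since every element of the image is $\chi x$ for some $x$ and $(x,x)\in R$ gives $(\chi x,\chi x)\in\chi R$; and antisymmetry, since $(\chi x,\chi y),(\chi y,\chi x)\in\chi R$ gives $\chi x\preceq\chi y\preceq\chi x$, so $\chi x=\chi y$ because $\preceq$ is antisymmetric. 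Neither check uses anything beyond $R$ being reflexive and $\preceq$ being antisymmetric.

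The remaining axiom, transitivity of $\chi R$, is where the real work lies and what I expect to be the main obstacle: from $(x,y)\in R$ and $(y',z)\in R$ with $\chi y=\chi y'$ one must produce $(u,v)\in R$ with $\chi u=\chi x$ and $\chi v=\chi z$, which transitivity of $R$ does not yield directly when $y\neq y'$, since $\chi$ need not be injective. I would settle this by making explicit the reading of Definition~\ref{def:TimePreserving} forced by its surrounding discussion: the order ``reflected in the image of $\chi$'' is the pushforward $\chi R$ itself, so the witnessing $\preceq$ must be taken to be $\chi R$ --- and, as the remark behind Proposition~\ref{prop:OP} observes, $\chi$ is indeed order-preserving onto $(\text{image},\chi R)$ whenever $\chi R$ is a partial order. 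With $\preceq=\chi R$, transitivity is inherited from $\preceq$, all three axioms hold, and the claim follows; as in Proposition~\ref{prop:OP}, no computation beyond these routine relational checks is required.
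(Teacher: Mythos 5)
Your overall strategy is the contrapositive of the paper's proof by contradiction, so the logical skeleton is the same: both arguments reduce to showing that if $f$ is time-preserving then $\chi\,R$ is a partial order, and your antisymmetry step (from $\chi\,R\subseteq\,\preceq$ together with antisymmetry of $\preceq$) is exactly the paper's, which picks witnesses $s_0\le s_1$, $u_0\le u_1$ with $\chi\,s_0=\chi\,u_1$, $\chi\,s_1=\chi\,u_0$, $\chi\,s_0\neq\chi\,s_1$ and derives a contradiction with order-preservation. Reflexivity is handled identically in both.

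The divergence is in transitivity, and your fix there is circular. You correctly observe that transitivity of the pushforward $\chi\,R$ does not follow from transitivity of $R$ when $\chi$ is not injective: witnesses $(x',y')\in R$ and $(y'',z'')\in R$ with $\chi\,y'=\chi\,y''$ cannot be chained. But your resolution --- decreeing that the witnessing order $\preceq$ of Definition~\ref{def:TimePreserving} ``must be'' $\chi\,R$ itself and then saying transitivity of $\chi\,R$ is ``inherited from $\preceq$'' --- assumes the conclusion: with $\preceq=\chi\,R$, the claim ``$\chi\,R$ is transitive because $\preceq$ is'' says nothing, and taking $\preceq=\chi\,R$ is only legitimate once you already know $\chi\,R$ is a partial order. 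Under the definition as actually written (there \emph{exists} some partial order on the image making $\chi$ order-preserving), the inclusion $\chi\,R\subseteq\,\preceq$ yields antisymmetry but not transitivity; a non-injective $\chi$ whose pushforward becomes an order only after transitive closure would be time-preserving while $\chi\,R$ fails transitivity, so the proposition would need that case to be addressed. To be fair, the paper's own proof dismisses this very point with the unproved assertion that ``reflexivity and transitivity are satisfied naturally,'' so you have put your finger on a genuine soft spot; but your proposal does not close it either --- it relocates the gap into a silent strengthening of the definition rather than supplying an argument.
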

\begin{proof}
By contradiction: let us assume binary relation $\chi\, R$ is not a partial order, but that $\chi$ is order-preserving with a partial order $R'$ on the image of $\chi$.
Since $\chi\,R$ is a relation derived from $R$, reflexivity and transitivity are satisfied naturally,
with the consequence that we can focus on antisymmetry.
Let us take $s_0,s_1,u_0,u_1\in (t: T)\times (C^{T_{\leq t}})$ s.t.
\[
s_0\le s_1,\, u_0\le u_1,\, \chi\, s_0 = \chi\, u_1,\, \chi\, s_1 = \chi\, u_0,\;\text{and}\; \chi\, s_0 \neq \chi\, s_1.
\]
Assuming that $\chi\,s_0 \le_{R'} \chi\,s_1$, $\chi\,s_1 \le_{R'} \chi\,s_0$ never holds since $R'$ is a partial order, 
even though the fact that $u_0\le u_1$ and $\chi$ is order-preserving yield ($\chi\,u_0=$)$\chi\,s_1 \le_{R'} \chi\,s_0(=\chi\,u_1)$. These provide a contradiction.
\end{proof}

\section{Classification of sequential circuits}

\subsection{Time-preserving circuits}

\subsubsection{Synchronous circuits}
as shown in Fig. \ref{fig:DFFsync}, where D-FF denotes D flip-flop, become time preserving as follows.
\begin{figure}[b]
 \centering
 \includegraphics[width=45mm]{./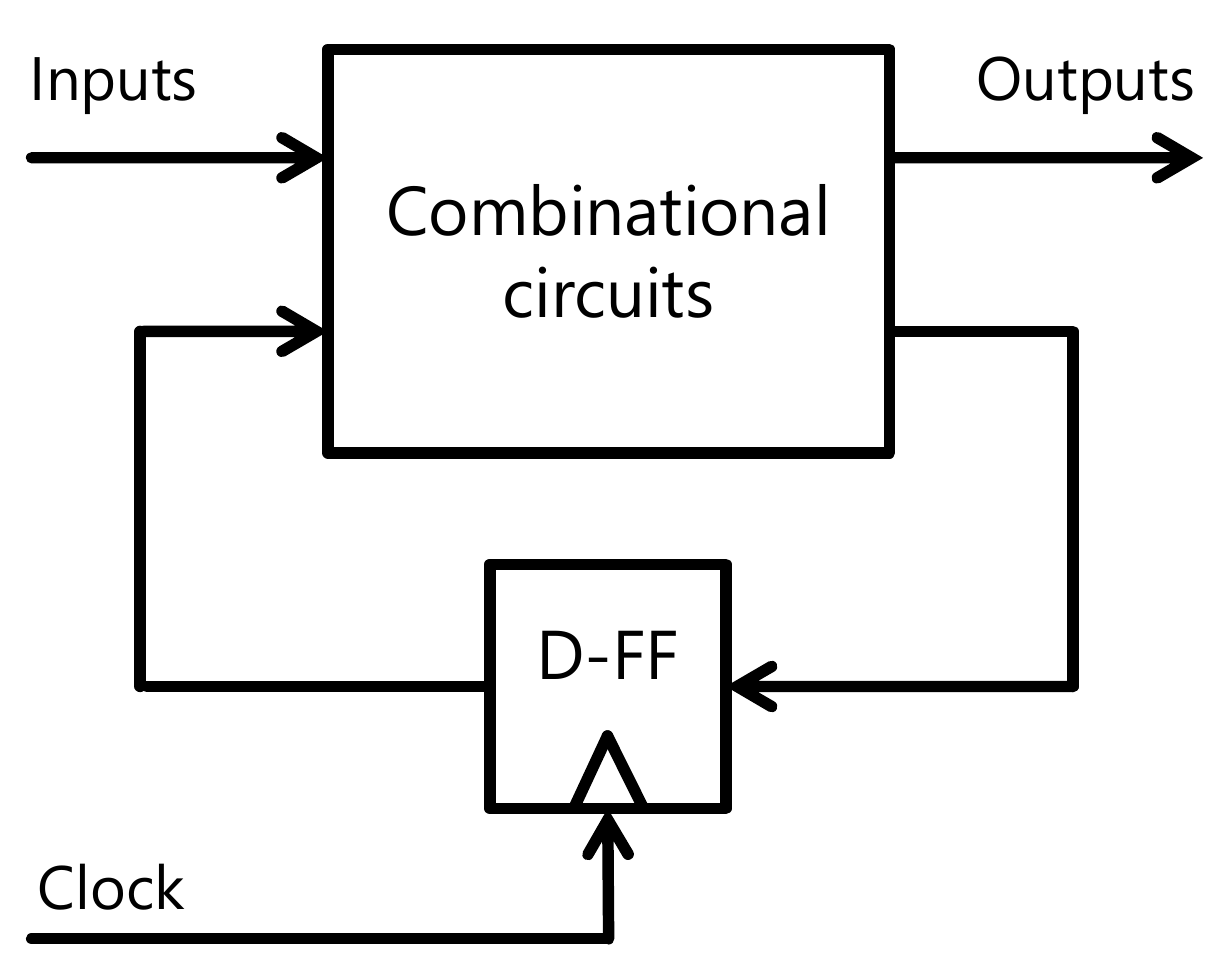}
  \caption{Synchronous circuits}
 \label{fig:DFFsync}
\end{figure}
Let $f$ be a causal stream function of an arbitrary synchronous circuit, then $f$ has the same type signature as (\ref{eq:DFFfinal}).
We are interested in whether the image of $\chi$ preserves temporal aspect of $T$.
The imge of $\chi$ is as shown in Fig. \ref{fig:RefSync}, where $\sigma_{t_n}$ has type signature  $C^{T_{\le t_n}}$.
\begin{figure}
 \centering
 \includegraphics[width=50mm]{./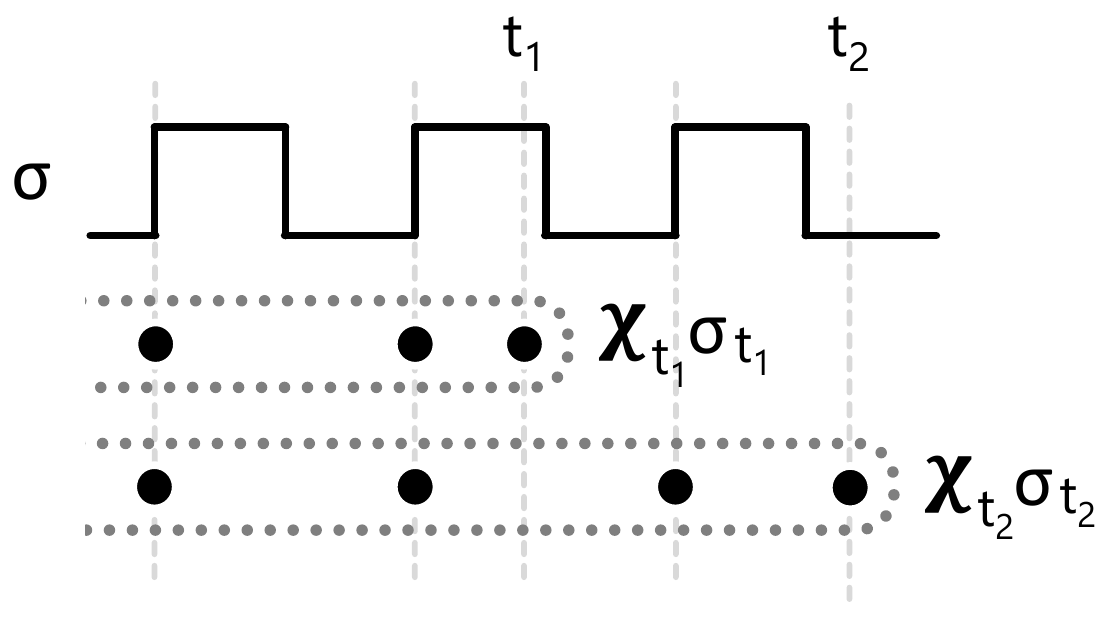}
  \caption{$\chi$-image of synchronous circuits}
 \label{fig:RefSync}
\end{figure}
Since $D_{out}$ of D-FF will be included in $D_{in}$ at the next clock tick, $\chi$ must be a set of all time points of positive edges in the past. 
That $\chi_{t_1}\,\sigma_{t_1}$, the image of $\chi$ at $t_1$ with given control signal $\sigma$, contains a data input of current time $t_1$ and the past data inputs at each positive edges.

Figure \ref{fig:RefSync} indicates that $t_1 \le t_2$ implies (past part of $\chi_{t_1}\, \sigma_{t_1}\,$) $\subseteq$ (past part of $\chi_{t_2}\, \sigma_{t_2}$) and of course (current time $t_1$ of $\chi_{t_1}\,\sigma_{t_1}$) $\leq$ (current time $t_2$ of $\chi_{t_2}\,\sigma_{t_2}$), i.e. both relations are in partial order.
Therefore, by Prop. \ref{prop:OP}, $\chi$ is order preserving and synchronous circuits are time-preserving.

\subsubsection{Multiple clock domain circuits}
Concerning circuits with multiple clock domains as shown in Fig. \ref{fig:MultiClock}, the image of $\phi$ is as shown in Fig. \ref{fig:RefMulti}, where $\sigma = \sigma' \times \sigma'' : (C')^T \times (C'')^T$, it configures inclusion order on the past parts, and it is also time preserving.
\begin{figure}
 \centering
 \includegraphics[width=60mm]{./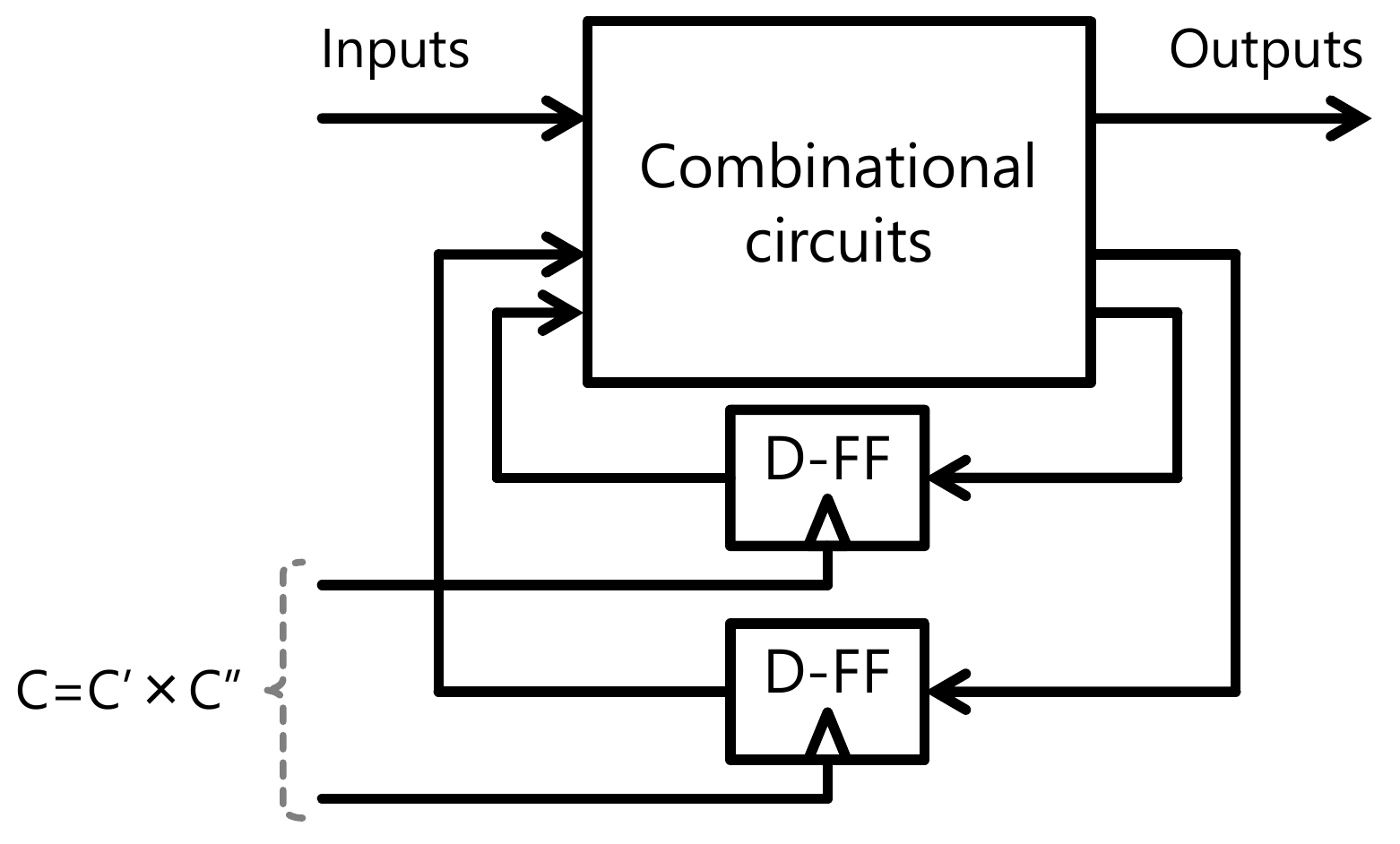}
  \caption{Multi-clock domain circuits}
 \label{fig:MultiClock}
\end{figure}
\begin{figure}
 \centering
 \includegraphics[width=50mm]{./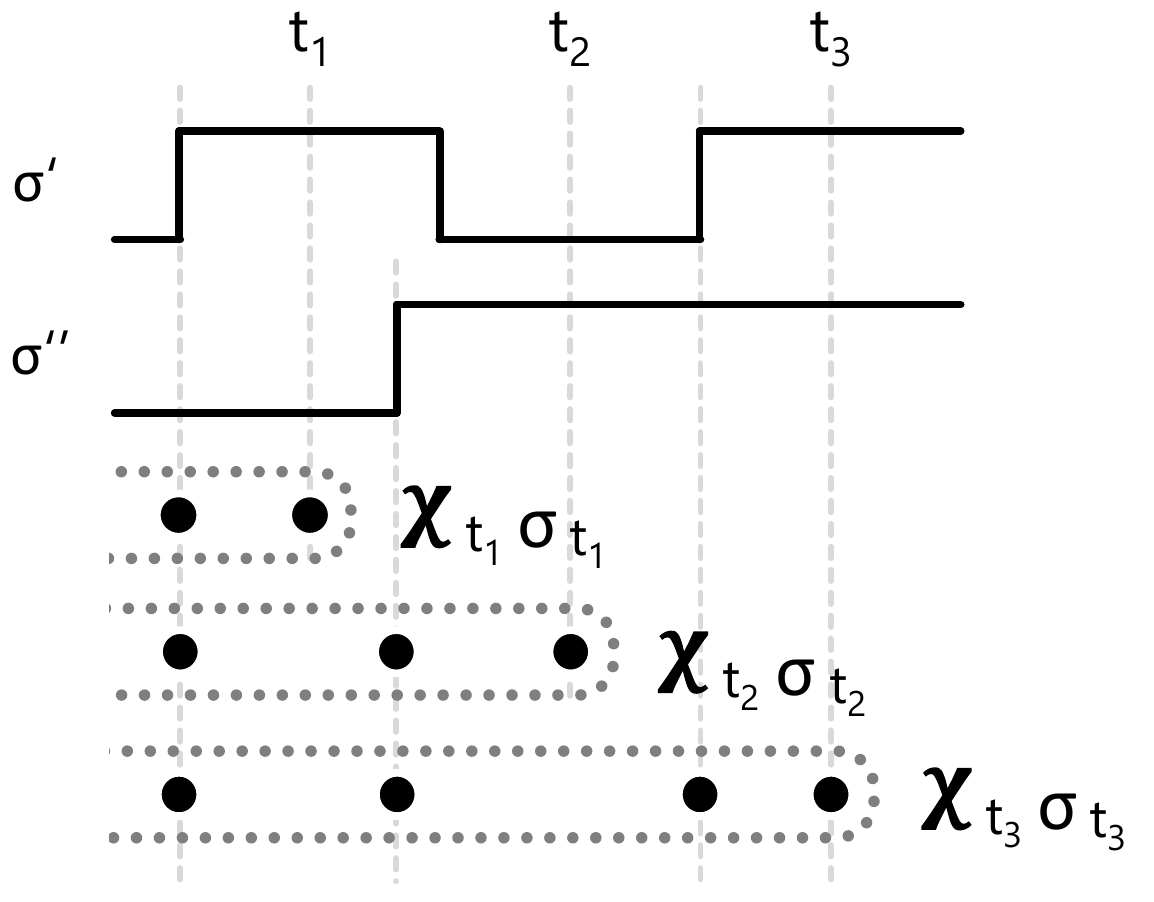}
  \caption{$\chi$-image of multi-clock domain circuits}
 \label{fig:RefMulti}
\end{figure}
However, note that a general expression of multi-clock domain circuits as shown in Fig. \ref{fig:MultiClock} does not adequately describe their essence. For a practical multi-clock domain circuit, its inputs and state inputs/outputs must be mostly divided into each domain.

\subsubsection{Multiplexers}
are not sequential but combinational circuits.
It is a circuit element that selects input signals according to a select signal,  depicted as shown in Fig.\ref{fig:SymbolMUX}.
\begin{figure}[hbt]
 \centering
 \includegraphics[width=15mm]{./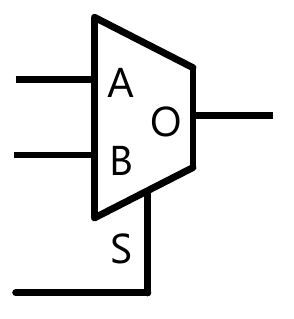}
 \captionsetup{justification=centering}
 \caption{Common symbol of multiplexers}
 \label{fig:SymbolMUX}
\end{figure}
A type signature for the circuit could be described as:
\[
(s : S) \to \chi\, s \to O \\
\]where $S=\{a,b\}$ and
\begin{align*}
\chi\,:\, S&\;\to\; \mathcal{P}(A\times B)\\
a &\;\mapsto\; A\times \emptyset\\
b &\;\mapsto\; \emptyset\times B.
\end{align*}In this example, filtering map $\chi$ is applied to data inputs $A\times B$, and it is also time-preserving because the image of $\chi$ configures inclusion order.

\subsection{Besides time-preserving circuits}

Figure \ref{fig:BlockDiagramMem} shows a circuit with memory that has two addresses $A$ and $B$, these are able to both read and write.
For two control signals $\sigma, \sigma' : C^T$ shown in Fig. \ref{fig:RefersMem}, this indicates that $\chi_{t_1}\, \sigma_{t_1}\le\chi_{t_2}\, \sigma_{t_2}$ and $\chi_{t_1}\, \sigma_{t_1} (=\chi_{t_2}\, \sigma'_{t_2})\ge\chi_{t_2}\, \sigma_{t_2}(=\chi_{t_1}\, \sigma'_{t_1})$ but $\chi_{t_1}\, \sigma_{t_1}\neq\chi_{t_2}\, \sigma_{t_2}$, thus antisymmetry does not satisfied.
Therefore, by Prop. \ref{prop:nonOP}, the circuit is not time-preserving.
\begin{figure}
 \centering
 \includegraphics[width=70mm]{./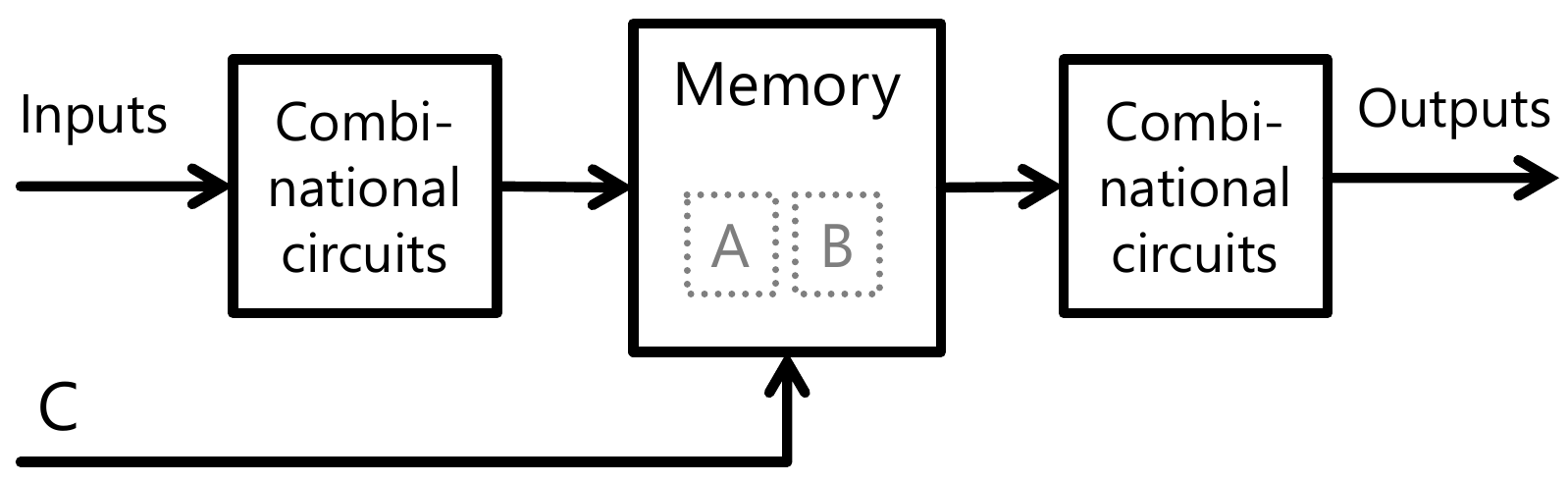}
 \captionsetup{justification=centering}
 \caption{A circuit with A/B memory}
 \label{fig:BlockDiagramMem}
\end{figure}
\begin{figure}
 \centering
 \includegraphics[width=55mm]{./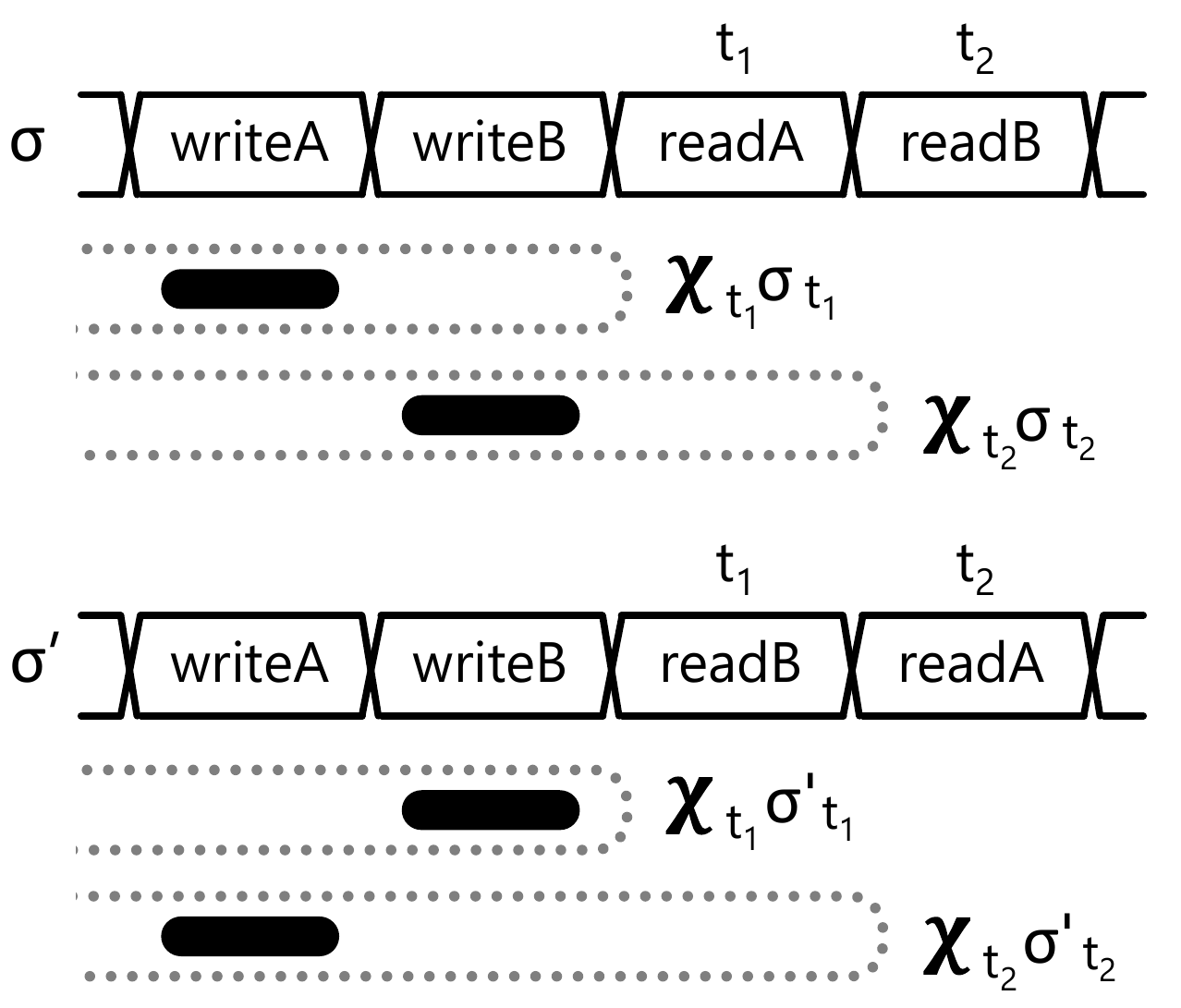}
 \captionsetup{justification=centering}
 \caption{$\chi$-image of the circuit of Fig. \ref{fig:BlockDiagramMem}}
 \label{fig:RefersMem}
\end{figure}

\subsection{Summary of the classification}

In this section, we examined the time preservation of several circuits/causal functions, and as a result, the following well-known circuits are found to be time-preserving:
\begin{itemize}
\item synchronous circuits
\item multi-clock domain circuits.
\end{itemize}
There is a circuit that can be expressed in fundamental form (\ref{eq:FundamentalForm}), but is not time-preserving: the circuit with A/B memory mentioned above.
Typical asynchronous circuits, such as SR latches, do not seem to be able to describe in the fundamental form.

\section{Conclusion}

In this paper, sequential circuits are regarded as causal functions, which is a formal representation of the natural language definition as ``the current output depends on past and current inputs.''
In the expressions of causal functions, we introduce a particular form called fundamental form, in which functions have signals that restrict the domains required to generate the current output.
We conducted a study on the methods of domain restriction and specifically analyzed instances in which a restriction maintains the temporal aspect of the circuit's environment, referred to as time-preserving.

The concept of time-preserving is used for the classification of sequential circuits in the latter half of this paper.
Whereas sequential circuits were previously classified as either synchronous or others (asynchronous), a new classification introduces a broader category encompassing synchronous circuits.
For instance, synchronous circuits and multiple clock domain circuits, which have multiple D flip-flops with distinct clock domains, are classified in the same category from this perspective.


\bibliographystyle{ieeetr}
\bibliography{the}

\end{document}